\newcommand{\figures}[1]{./Figures/#1}
\newcommand{\biblio}[1]{./#1}
\definecolor{blank}{rgb}{0.7,0.7,0.7}
\long\def\comment#1{}
\renewcommand{\phi}{\varphi}
\def\defemb#1#2{\expandafter\def\csname #1\endcsname
                              {\relax\ifmmode #2\else\hbox{$#2$}\fi}}
\newenvironment{prog}{\vspace{1.0ex}\par
\obeylines\@vobeyspaces\tt}{\vspace{1.0ex}\noindent
}
\newcommand{\startprog}{\begin{prog}}
\newcommand{\stopprog}{\end{prog}\noindent}
\begin{document}

\frontmatter
\pagestyle{headings}
\addtocmark{Hamiltonian Mechanics}

\title
{
 Transforming \emph{while/do/for/foreach}-Loops\\ into Recursive Methods%
 \thanks
 {
This work has been partially supported by the EU (FEDER) and the
Spanish \emph{Ministerio de Econom\'{\i}a y Competitividad
(Secretar\'{\i}a de Estado de Investigaci\'on, Desarrollo e Innovaci\'on)}
under grant TIN2013-44742-C4-1-R and by the
\emph{Generalitat Valenciana} under grant PROMETEO/2011/052.
David Insa was partially supported
by the Spanish {Ministerio de Educaci\'on} under FPU grant AP2010-4415.
 }
}
\titlerunning{Transforming \emph{while/do/for/foreach}-Loops into Recursive Methods}

\author{David Insa \and Josep Silva}
\institute
{
Departamento de Sistemas Inform\'aticos y Computaci\'on\\
Universitat Polit\`ecnica de Val\`encia\\
Camino de Vera s/n\\
E-46022 Valencia, Spain\\
 \email{\{dinsa,jsilva\}@dsic.upv.es}
}

\maketitle


\begin{abstract}
\noindent

In software engineering, taking a good election between recursion and iteration is essential because their efficiency and maintenance are different. In fact, developers often need to transform iteration into recursion (e.g., in debugging, to decompose the call graph into iterations);  thus, it is quite surprising that there does not exist a public transformation from loops to recursion that handles all kinds of loops.
This article describes a transformation able to transform iterative loops into equivalent recursive methods. The transformation is described for the programming language Java, but it is general enough as to be adapted to many other languages that allow iteration and recursion.
We describe the changes needed to transform loops of types \emph{while/do/for/foreach} into recursion. 
Each kind of loop requires a particular treatment that is described and exemplified.
\end{abstract}

\keywords{Program transformation, Iteration, Recursion}


\section{Introduction}
\label{sec_introduction}


Iteration and recursion are two different ways to reach the same objective. 
In some paradigms, such as the functional or logic, iteration does not even exist. 
In other paradigms, e.g., the imperative or the object-oriented paradigm, the programmer can decide which of them to use. 
However, they are not totally equivalent, and sometimes it is desirable to use recursion, while other times iteration is preferable. 
In particular, one of the most important differences is the performance achieved by both of them. 
In general, compilers have produced more efficient code for iteration, and this is the reason 
why several transformations from recursion to iteration exist (see, e.g., \cite{HarK92,LiuS00,Mcc62}). 
Recursion in contrast is known to be more intuitive, reusable and debuggable. 
In fact, other researchers have obtained both theoretical and experimental results showing significant performance benefits of recursive algorithms on both uniprocessor hierarchies and on shared-memory systems \cite{YiAK00}. In particular, Gustavson and Elmroth \cite{Gus97,ElmG00} have demonstrated significant performance benefits from recursive versions of Cholesky and QR factorization, and Gaussian elimination with pivoting. 

Transforming loops to recursion is also useful in debugging, as demonstrated by the technique presented in \cite{InsST12c}. In this paper, transforming all iterative loops into recursive methods before starting an algorithmic debugging session can improve the interaction between the debugger and the programmer, and it can also reduce the granularity of the errors found. In particular, algorithmic debuggers only report buggy methods. Thus, a bug inside a loop is reported as a bug in the whole method that contains the loop, which is sometimes too imprecise. Transforming a loop into a recursive method allows the debugger to identify the recursive method (and thus the loop) as buggy. 
Hence, we wanted to implement this transformation and integrate it in the \emph{Declarative Debugger for Java} (DDJ) \cite{InsS10}, but, surprisingly, we did not find any 
available transformation from iterative loops into recursive methods for Java (neither for any other object-oriented language). Therefore, we had to implement it by ourselves and decided 
to automatize and generalize the transformation to make it publicly available. From the best of our knowledge this is the first transformation for all kinds of iterative loops. 

One important property of our transformation is that it always produces tail recursive methods \cite{Cli98}. 
This means that they can be compiled to efficient code because the compiler only needs to keep two activation records in the stack to execute the whole loop \cite{Han90,Bak96}.
Another important property is that each iteration is always represented with one recursive call. This means that a loop that performs 100 iterations is transformed into a recursive method that performs 100 recursive calls. This equivalence between iterations and recursive calls is very important for some applications such as debugging, and it produces code that is more maintainable.  

In the rest of the paper we describe our transformation for all kinds of loops in Java (i.e., \emph{while/do/for/foreach}). 
The transformation of each particular kind of loop is explained with an example. 
We start with an illustrative example that provides the reader with a general view of how the transformation works.

\begin{example}
Consider the Java code in Algorithm~\ref{alg_sqrtWhile} that computes the square root of the input argument.

\begin{algorithm}
\caption{Sqrt (iterative version)}
\label{alg_sqrtWhile}
\begin{algorithmic}[1]
    \STATE {\bf public double} sqrt({\bf double} x) \{
    \STATE ~~~~{\bf if} (x $<$ 0)
    \STATE ~~~~~~~~{\bf return} Double.NaN;
    \STATE ~~~~{\bf double} b = x;
    \STATE ~~~~{\bf while} (Math.abs(b * b - x) $>$ 1e-12)
    \STATE ~~~~~~~~b = ((x / b) + b) / 2;
    \STATE ~~~~{\bf return} b;
    \STATE \}
\end{algorithmic}
\end{algorithm}

\noindent This algorithm implements a \emph{while}-loop where each iteration obtains a more accurate approximation of the square root of variable \emph{x}. 
The transformed code is depicted in Algorithm~\ref{alg_sqrtWhileTransformed} that implements the same functionality but replacing the \emph{while}-loop with a new recursive method $sqrt\_loop$.

\begin{algorithm}
\caption{Sqrt (recursive version)}
\label{alg_sqrtWhileTransformed}
\begin{algorithmic}[1]
    \STATE {\bf public double} sqrt({\bf double} x) \{
    \STATE ~~~~{\bf if} (x $<$ 0)
    \STATE ~~~~~~~~{\bf return} Double.NaN;
    \STATE ~~~~{\bf double} b = x;
    \STATE ~~~~{\bf if} (Math.abs(b * b - x) $>$ 1e-12)
    \STATE ~~~~~~~~b = this.sqrt\_loop(x, b);
    \STATE ~~~~{\bf return} b;
    \STATE \}
    \STATE {\bf private double} sqrt\_loop({\bf double} x, {\bf double} b) \{
    \STATE ~~~~b = ((x / b) + b) / 2;
    \STATE ~~~~{\bf if} (Math.abs(b * b - x) $>$ 1e-12)
    \STATE ~~~~~~~~{\bf return} this.sqrt\_loop(x, b);
    \STATE ~~~~{\bf return} b;
    \STATE \}
\end{algorithmic}
\end{algorithm}
\end{example}

Essentially, the transformation performs two steps:
\begin{enumerate}
\item Substitute the original loop by new code (lines 5-6 in Algorithm~\ref{alg_sqrtWhileTransformed}).
\item Create a new recursive method (lines 9-14 in Algorithm~\ref{alg_sqrtWhileTransformed}).
\end{enumerate}

In Algorithm~\ref{alg_sqrtWhileTransformed}, the new code in method $sqrt$ includes a call (line 6) to the recursive method $sqrt\_loop$ that implements the loop (lines 9-14). 
This new recursive method contains the body of the original loop (line 10). 
Therefore, each time the method is invoked, an iteration of the loop is performed. 
The rest of the code added during the transformation (lines 5, 11-13) is the code needed to simulate the same effects of a \emph{while}-loop. 
Therefore, this is the only code that we should change to adapt the transformation to the other kinds of loops (\emph{do/for/foreach}).

\section{Transforming loops into recursive methods}
\label{sec-transf}

\begin{table*}[p]
\centering
\includegraphics[width=16cm]{\figures{tablaResumen.png}}
\caption{Loops transformation taxonomy}
\label{fig_tablaResumen}
\end{table*}

Our program transformations are summarized in Table~\ref{fig_tablaResumen}. 
This table has a different row for each kind of loop. 
For each loop, we have two columns. 
One for the iterative version of the loop, and one for the transformed recursive version.  
Observe that the code is presented in an abstract way, so that it is formed by a parameterized skeleton of the code that can be instantiated with any particular loop of each kind.

In the recursive version, the code inside the ellipses is code inserted by the programmer (it comes from the iterative version). 
The rest of the code is automatically generated by the transformation. 
Here, {\tt result} and {\tt loop} are fresh names (not present in the iterative version) for a variable and a method respectively; {\tt type} is a data type that corresponds to the data type declared by the user (it is associated to a variable already declared in the iterative version).
The code inside the squares has the following meaning:
\begin{description}
\item \fbox{1} contains the sequence formed by all variables declared in {\tt Code1} (and in {\tt ini} in \emph{for}-loops) that are used in {\tt Code2} and {\tt cond} (and in {\tt upd} in \emph{for}-loops).  
\item \fbox{1'} contains the previous sequence but including types (because it is used as the parameters of the method, and the previous sequence is used as the arguments of the call to the method). 
\item \fbox{2} contains for each object in the array {\tt result} (which contains the same variables as \fbox{1} and \fbox{1'}), a casting of the object to assign the corresponding type. For instance, if the array contains two variables [{\tt x,y}] whose types are respectively {\tt double} and {\tt int};  then \fbox{2} contains:\\ {\tt x = (Double) result[0];\\ y = (Integer) result[1]};
\end{description}
 
Observe that, even though these steps are based on Java, the same steps (with small modifications) can be used to transform loops in many other imperative or object-oriented languages. 
The code in Table~\ref{fig_tablaResumen} is generic. In some specific cases, this code can be optimized. 
For instance, observe that the recursive method always returns an array of objects ({\tt return new Object[] \{...\}}) with all variables that changed in the loop. This array is unnecessary and inefficient if the recursive method only needs to return one variable (or if it does not need to return any variable). Therefore, the creation of the array should be replaced by a single variable or null (i.e., {\tt return null}).  
In the rest of the paper, we always apply optimizations when possible, so that the code does not perform any unnecessary operations. This allows us to present a generic transformation as the one in Table~\ref{fig_tablaResumen}, and also to provide specific efficient transformations for each kind of loop. 
The optimizations are not needed to understand the transformation, but they should be considered when implementing it. 
In the rest of this section we explain the transformation of all kinds of loop.
The four kinds of loops (\emph{while/do/for/foreach}) present in the Java language behave nearly in the same way. Therefore, the modifications needed to transform each kind of loop into a recursive method are very similar. We start by describing the transformation for \emph{while}-loops, and then we describe the variations needed to adapt the transformation for \emph{do/for/foreach}-loops.

\subsection{Transformation of \emph{while}-loops}
\label{sec_loopTransformation}

\begin{figure}[h!]
\centering
\subfigure[Original method]
{\includegraphics[width=6cm]{\figures{genericLoop.png}}}
\subfigure[Transformed method]
{\includegraphics[width=10cm]{\figures{genericRecursiveMethodCaller.png}}\label{fig_genericRecursiveMethodCaller}}
\\
\subfigure[Recursive method]
{\includegraphics[width=9cm]{\figures{genericRecursiveMethod.png}}\label{fig_genericRecursiveMethod}}
\caption{\emph{while}-loop transformation}
\label{fig_whileSubstitution}
\end{figure}

In Table~\ref{tab_pasos} we show a general overview of the steps needed to transform a Java iterative \emph{while}-loop into an equivalent recursive method. Each step is described in the following.

\begin{table*}[h!]
\small
\begin{center}
\begin{tabular}{l@{~~}|@{~~}c}
{\bf Step} & {\bf Correspondence with Figure~\ref{fig_whileSubstitution}}\\
\hline
~& {\bf Figure~\ref{fig_genericRecursiveMethodCaller}}\\
\hspace{-1em}1) ~~~Substitute the loop by a call to the recursive method & Caller\\
\hspace{-1em}1.1) ~~~~~If the loop condition is satisfied & Loop condition\\
\hspace{-1em}1.1.1) ~~~~~~~Perform the first iteration & First iteration\\
\hspace{-1em}1.2) ~~~~~Catch the variables modified during the recursion & Modified variables\\
\hspace{-1em}1.3) ~~~~~Update the modified variables & Updated variables\\
~&\\
~& {\bf Figure~\ref{fig_genericRecursiveMethod}}\\
\hspace{-1em}2) ~~~Create the recursive method & Recursive method\\
\hspace{-1em}2.1) ~~~~~Define the method's parameters & Parameters\\
\hspace{-1em}2.2) ~~~~~Define the code of the recursive method & \\
\hspace{-1em}2.2.1) ~~~~~~~Include the code of the original loop & Loop code\\
\hspace{-1em}2.2.2) ~~~~~~~If the loop condition is satisfied & Loop condition\\
\hspace{-1em}2.2.2.1) ~~~~~~~~~Perform the next iteration & Next iteration\\
\hspace{-1em}2.2.3) ~~~~~~~Otherwise return the modified variables & Modified variables\\
\end{tabular}
\end{center}
\caption{Steps of the \emph{while}-loop transformation}
\label{tab_pasos}
\end{table*}


\subsubsection{Substitute the loop by a call to the recursive method}
The first step is to remove the original loop and substitute it with a call to the new recursive method. 
We can see this substitution in Figure~\ref{fig_genericRecursiveMethodCaller}. 
Observe that some parts of the transformation have been labeled to ease later references to the code.
The tasks performed during the substitution are explained in the following:



\begin{itemize}
\item {\bf Perform the first iteration}\\
In the \emph{while}-loop, first of all we check whether the \emph{loop condition} holds. 
If it does not hold, then the loop is not executed. 
Otherwise, the \emph{first iteration} is performed by calling the recursive method with the variables used inside the loop as arguments of the method call. Hence, we need an analysis to know what variables are used inside the loop.
The recursive method is in charge of executing as many iterations of the loop as needed.

\item {\bf Catch the variables modified during the recursion}\\
The variables modified during the recursion cannot be automatically updated in Java because all parameters are passed by value. 
Therefore, if we modify an argument inside a method we are only modifying a copy of the original variable. 
This also happens with objects. 
Hence, in order to output those  \emph{modified variables} that are needed outside the loop, we use an array of objects.
Because the \emph{modified variables} can be of any data type\footnote{In the case that the returned values are primitive types, then they are naturally encapsulated by the compiler in their associated primitive wrapper classes. }, we use an array of objects of class {\bf Object}.

In presence of call-by-reference, this step should be omitted.

\item {\bf Update the modified variables}\\
After the execution of the loop, the \emph{modified variables} are returned inside an {\bf Object} array.
Each variable in this array must be cast to its respective type before being assigned to the corresponding variable declared before the loop.

In presence of call-by-reference, this step should be omitted.
\end{itemize}

\subsubsection{Create the recursive method}
Once we have substituted the loop, we create a new method that implements the loop in a recursive way. 
This \emph{recursive method} is shown in Figure~\ref{fig_genericRecursiveMethod}.

The code of the \emph{recursive method} is explained in the following:

\begin{itemize}
\item {\bf Define the method's parameters}\\
There are variables declared inside a method but declared outside the loop and used by this loop.  
When the loop is transformed into a \emph{recursive method}, these variables are not accessible from inside the \emph{recursive method}. 
Therefore, they must be passed as arguments in the calls to it.
Hence, the parameters of the \emph{recursive method} are the intersection between the variables declared before the loop and the variables used inside it.

\item {\bf Define the code of the recursive method}\\
Each iteration of the original iterative loop is emulated with a call to the new recursive method. 
Therefore in the code of the \emph{recursive method} we have to execute the current iteration and control whether the \emph{next iteration} must be executed or not.

\begin{itemize}
\item {\bf Include the code of the original loop}\\
When the \emph{recursive method} is invoked it means that we want to execute one iteration of the loop. Therefore, we place the \emph{original code} of the loop at the beginning of the \emph{recursive method}. This code is supposed to update the variables that control the \emph{loop condition}. Otherwise, the original loop is in fact an infinite loop and the recursive method created will be invoked infinitely.

\item {\bf Perform the next iteration}\\
Once the iteration is executed, we check the \emph{loop condition} again to know whether another iteration must still be executed. 
In such a case, we perform the \emph{next iteration} with the same arguments. 
Note that the values of the arguments can be modified during the execution of the iteration, therefore, each iteration has different arguments values, but the names and the number of arguments remain always the same.

\item {\bf Otherwise return the modified variables}\\
If the loop condition does not hold, the loop ends and thus we must finish the sequence of \emph{recursive method} calls and return to the original method in order to continue executing the rest of the code. 
Because the arguments have been updated in each recursive call, at this point we have the last values of the variables involved in the loop. 
Hence these variables must be returned in order to update them in the original method. 
Observe that these variables are passed from iteration to iteration during the execution of the recursive method until it is finally returned to the recursive method caller. 

In presence of call-by-reference, this step should be omitted.
\end{itemize}
\end{itemize}

Figure~\ref{fig_whileSubstitutionInstance} shows an example of transformation of a \emph{while}-loop.

\begin{figure}[h!]
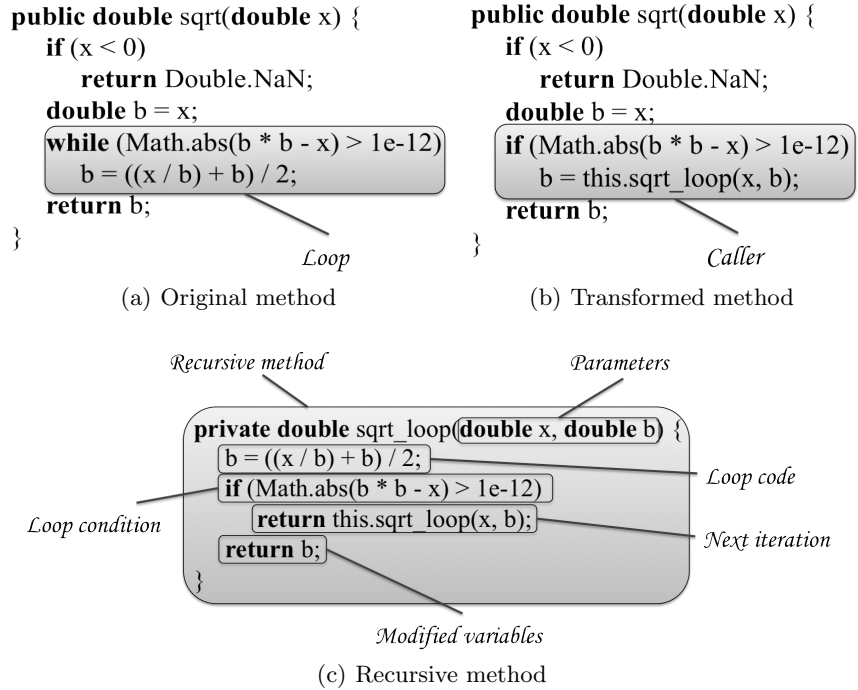

\centering
\subfigure[Original method]
{\includegraphics[width=6cm]{\figures{whileLoop.png}}}
\subfigure[Transformed method]
{\includegraphics[width=5.3cm]{\figures{whileRecursiveMethodCaller.png}}}
\subfigure[Recursive method]
{\includegraphics[width=11cm]{\figures{whileRecursiveMethod.png}}}
\caption{\emph{while}-loop transformation}
\label{fig_whileSubstitutionInstance}
\end{figure}

\subsection{Transformation of \emph{do}-loops}
\label{sec_doLoop}

%

\emph{do}-loops behave exactly in the same way as \emph{while}-loops except in one detail: The first iteration of the \emph{do}-loop is always performed. In Figure~\ref{fig_doLoop} we can see an example of a \emph{do}-loop.

\begin{figure}[h!]
\centering
\includegraphics[width=5cm]{\figures{doLoop.png}}
\caption{do-loop}
\label{fig_doLoop}
\end{figure}

This code obtains the square root value of variable \emph{x} as the code in Algorithm~\ref{alg_sqrtWhile}. 
The difference is that, if variable \emph{x} is either 0 or 1, then the method directly returns variable \emph{x}, otherwise the loop is performed in order to calculate the square root. 
In order to transform the \emph{do}-loop into a recursive method, we can follow the same steps used in Table~\ref{tab_pasos} with only one change: in step 1.1 the loop condition is not evaluated; instead, we only need to add a new code block to ensure that those variables created during the transformation are not available outside the transformed code.

Figure~\ref{fig_doTransformation} illustrates the only change needed to transform the \emph{do}-loop into a recursive method. Observe that in this example there is no need to introduce a new block, because the transformed code does not create new variables, but in the general case the block could be needed.

\begin{figure}[h!]
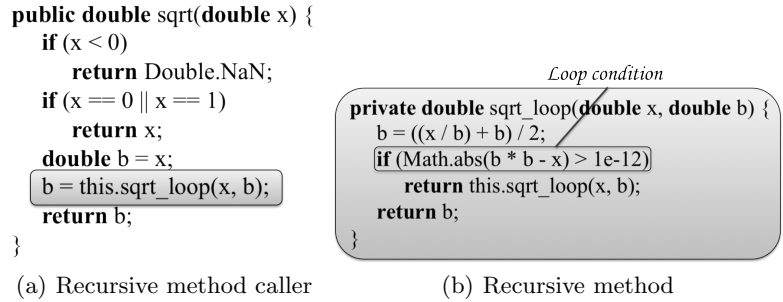

\centering
\subfigure[Recursive method caller]
{\includegraphics[width=4.25cm]{\figures{doRecursiveMethodCaller.png}}}
\subfigure[Recursive method]
{\includegraphics[width=6cm]{\figures{doRecursiveMethod.png}}}
\caption{do-loop transformation}
\label{fig_doTransformation}
\end{figure}
 
\begin{itemize}
\item {\bf Add a new code block}\\
Observe in Table~\ref{fig_tablaResumen}, in column {\tt Caller}, that, contrarily to \emph{while}-loops, \emph{do}-loops need to introduce a new \emph{block} (i.e., a new scope).
The reason is that there could exist variables with the same name as the variables created during the transformation (e.g., \emph{result}). 
Hence, the new block avoids variable clashes and limits the scope of the variables created by the transformation. 
\end{itemize}

\subsection{Transformation of \emph{for}-loops}
\label{sec_forLoop}

One of the most frequently used loops in Java is the \emph{for}-loop. 
This loop behaves exactly in the same way as the \emph{while}-loop except in one detail: 
\emph{for}-loops provide the programmer with a mechanism to declare, initialize and update variables that will be accessible inside the loop. 

In Figure~\ref{fig_forLoop} we can see an example of a \emph{for}-loop. 
This code obtains the square root value of variable \emph{x} exactly as the code in Algorithm~\ref{alg_sqrtWhile}, 
but it also prints the approximation obtained in every iteration. 
We can see in Figure~\ref{fig_forRecursiveMethodCaller} and \ref{fig_forRecursiveMethod} the additional changes needed to transform the \emph{for}-loop into a recursive method.

\begin{figure}[h!]
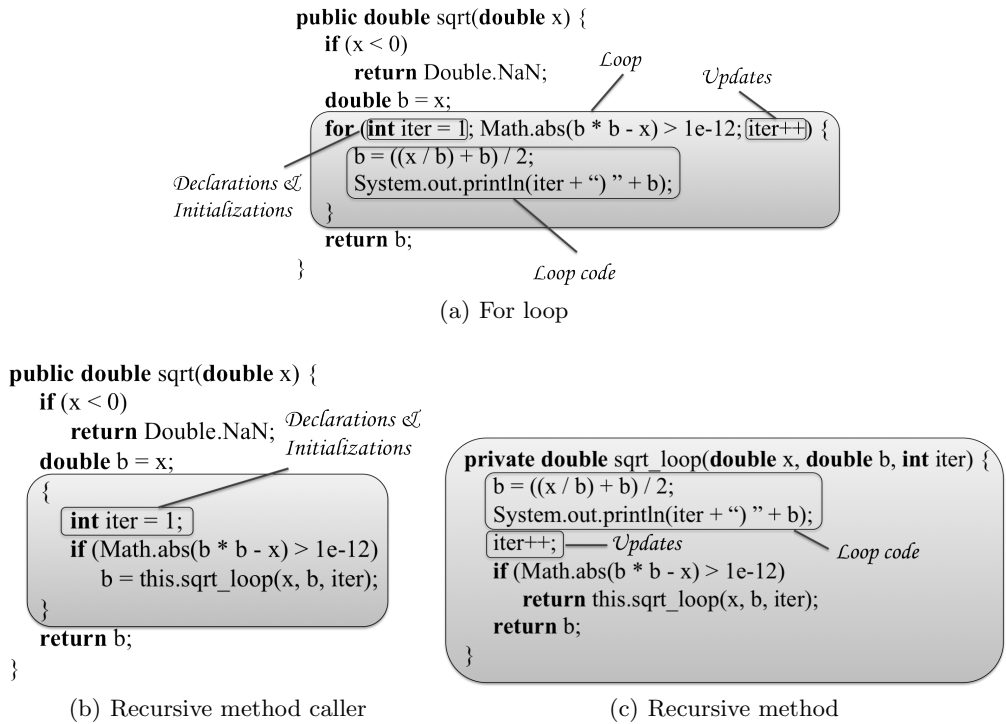

\centering
\subfigure[For loop]
{\includegraphics[width=9cm]{\figures{forLoop.png}}\label{fig_forLoop}}
\\
\subfigure[Recursive method caller]
{\includegraphics[width=5.75cm]{\figures{forRecursiveMethodCaller.png}}\label{fig_forRecursiveMethodCaller}}
\subfigure[Recursive method]
{\includegraphics[width=7.5cm]{\figures{forRecursiveMethod.png}}\label{fig_forRecursiveMethod}}
\caption{for-loop transformation}
\label{fig_forTransformation}
\end{figure}

As shown in Figure~\ref{fig_forTransformation}, in order to transform the \emph{for}-loop into a recursive method, we can follow the same steps used in Table~\ref{tab_pasos}, but we have to make three changes:\\
\begin{itemize}
\item {\bf Add a new code block}\\
Exactly in the same way and with the same purpose as in \emph{do}-loops. 
\item {\bf Add the declarations/initializations at the beginning of the block}\\
In the original method, those variables created during the \emph{declaration} and \emph{initialization} of the loop are only available inside it (and not in the code that follows the loop). We must ensure that these variables keep the same scope in the transformed code. 
This can be easily achieved with the new \emph{block}. 
In the transformed code, those variables are declared and initialized at the beginning of the new \emph{block}, and they are passed as arguments to the recursive method in every iteration to make them accessible inside it. 

\item {\bf Add the updates between the loop code and the loop condition}\\
In \emph{for}-loops there exists the possibility of executing code between iterations. 
This code is usually a collection of \emph{updates} of the variables declared at the beginning of the loop (e.g., in Figure~\ref{fig_forLoop} this code is {\tt iter++}). 
However, this code could be formed by a series of expressions separated by commas that could include method invocations, assignments, etc. 
Because this \emph{update} code is always executed before the condition of the loop, it must be placed in the recursive method between the \emph{loop code} and the \emph{loop condition}.
\end{itemize}

\subsection{Transformation of \emph{foreach}-loops}
\label{sec_foreachLoop}

\emph{foreach}-loops are specially useful to traverse collections of elements. 
In particular, this kind of loops traverses a given collection and it executes a block of code for each element. 
The transformation of a \emph{foreach}-loop into a recursive method is different depending on the kind of collection that is traversed. 
In Java we can use \emph{foreach}-loops either with \emph{arrays} or \emph{iterable} objects. 
We explain each transformation separately.


\subsubsection{\emph{foreach}-loop used to traverse arrays}
An array is a composite data structure where elements have been sequentialized, and thus, they can be traversed linearly.
We can see an example of a \emph{foreach}-loop that traverses an array in Algorithm~\ref{alg_foreachArray}.

\begin{algorithm}[h!]
\caption{\emph{foreach}-loop that traverses an array (iterative version)}
\label{alg_foreachArray}
\begin{algorithmic}[1]
    \STATE {\bf public void} foreachArray() \{
    \STATE ~~~~{\bf double[]} numbers = {\bf new double[]} \{ 4.0, 9.0 \};
    \STATE ~~~~{\bf for} ({\bf double} number : numbers) \{
    \STATE ~~~~~~~~{\bf double} sqrt = this.sqrt(number);
    \STATE ~~~~~~~~System.out.println(``sqrt('' + number + ``) = '' + sqrt);
    \STATE ~~~~\}
    \STATE \}
\end{algorithmic}
\end{algorithm}

This code computes and prints the square root of all elements in the array [4.0, 9.0]. 
Each individual square root is computed with Algorithm~\ref{alg_sqrtWhile}.
The \emph{foreach}-loop traverses the array sequentially starting in position \emph{numbers[0]} until the last element in the array. 
The transformation of this loop into an equivalent recursive method is very similar to the transformation of a \emph{for}-loop. 
However there are differences. For instance, \emph{foreach}-loops lack of a counter. 
This can be observed in Figure~\ref{fig_foreachArrayTransformation} that implements a recursive method equivalent to the loop in Algorithm~\ref{alg_foreachArray}.


\begin{figure*}[t!]
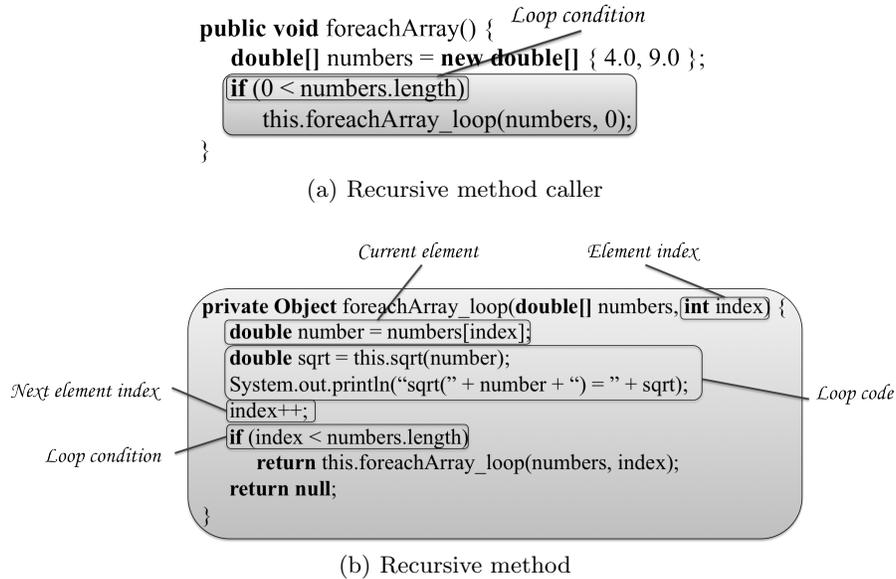

\centering
\subfigure[Recursive method caller]
{\includegraphics[width=7cm]{\figures{foreachArrayRecursiveMethodCaller.png}}}
\\
\subfigure[Recursive method]
{\includegraphics[width=12cm]{\figures{foreachArrayRecursiveMethod.png}}}
\caption{\emph{foreach}-loop transformation (Array version)}
\label{fig_foreachArrayTransformation}
\end{figure*}

In Figure~\ref{fig_foreachArrayTransformation} we can see the symmetry with respect to the \emph{for}-loop transformation. 
The only difference is the creation of a fresh variable that is passed as argument in the recursive method calls (in the example this variable is called \emph{index}). This variable is used for:

\begin{itemize}
\item {\bf Controlling whether there are more elements to be treated}\\
A \emph{foreach}-loop is only executed if the array contains elements. 
Therefore we need a \emph{loop condition} in the recursive method caller and another in the recursive method to know when there are no more elements in the array and thus finish the traversal.
The later is controlled with a variable (\emph{index} in the example) acting as a counter.
\item {\bf Obtaining the next element to be treated}\\
During each iteration of the \emph{foreach}-loop a variable called \emph{number} is instantiated with one element of the array (line 3 of Algorithm~\ref{alg_foreachArray}). 
In the transformation this behavior is emulated by declaring and initializing this variable at the beginning of the recursive method. 
It is initialized to the corresponding element of the array by using variable \emph{index}.
\end{itemize}

\subsubsection{\emph{foreach}-loop used to traverse \emph{iterable} objects}

A \emph{foreach}-loop can be used to traverse objects that implement the interface \emph{Iterable}.
Algorithm~\ref{alg_foreachIterable} shows an example of a \emph{foreach}-loop using one of these objects.

\begin{algorithm}[h!]
\caption{\emph{foreach}-loop used to traverse an iterable object (iterative version)}
\label{alg_foreachIterable}
\begin{algorithmic}[1]
    \STATE {\bf public void} foreachIterable() \{
    \STATE ~~~~{\bf List$<$Double$>$} numbers = Arrays.asList(4.0, 9.0);
    \STATE ~~~~{\bf for} ({\bf double} number : numbers) \{
    \STATE ~~~~~~~~{\bf double} sqrt = this.sqrt(number);
    \STATE ~~~~~~~~System.out.println(``sqrt('' + number + ``) = '' + sqrt);
    \STATE ~~~~\}
    \STATE \}
\end{algorithmic}
\end{algorithm}

This code behaves exactly in the same way as Algorithm~\ref{alg_foreachArray} but using an \emph{iterable} object instead of an array (\emph{numbers} is an iterable object because it is an instance of class \emph{List} that in turn implements the interface \emph{Iterable}). 
The interface \emph{Iterable} only has one method, called \emph{iterator}, that returns an object that implements the \emph{Iterator} interface.
With regard to the interface \emph{Iterator}, it forces the programmer to implement the \emph{next}, \emph{hasNext} and \emph{remove} methods; 
and these methods allow the programmer to freely implement how the collection is traversed (e.g., the order, whether repetitions are taken into account or not, etc.). Therefore, the transformed code should use these methods to traverse the collection.
We can see in Figure~\ref{fig_foreachIterableTransformation} a recursive method equivalent to Algorithm~\ref{alg_foreachIterable}.


\begin{figure*}[t!]
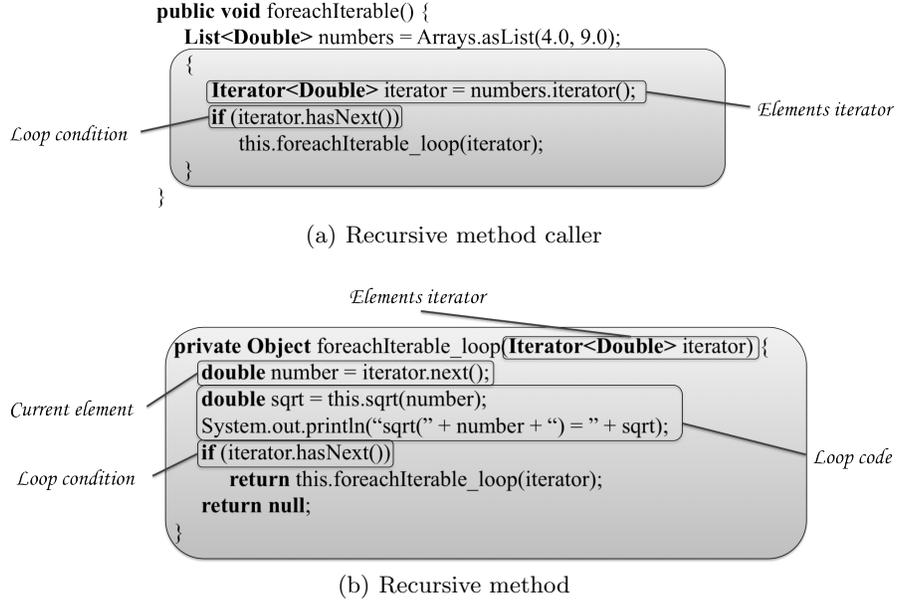

\centering
\subfigure[Recursive method caller]
{\includegraphics[width=12cm]{\figures{foreachIterableRecursiveMethodCaller.png}}}
\\
\subfigure[Recursive method]
{\includegraphics[width=12cm]{\figures{foreachIterableRecursiveMethod.png}}}
\caption{foreach-loop transformation (Iterable version)}
\label{fig_foreachIterableTransformation}
\end{figure*}

Observe that the transformed code in Figure~\ref{fig_foreachIterableTransformation} is very similar to the one in Figure~\ref{fig_foreachArrayTransformation}. The only difference is the use of an \emph{iterator} variable (instead of an integer variable) that controls the element of the collection to be treated. Note that method \emph{next} of variable \emph{iterator} allows us to know what is the next element to be treated, and method \emph{hasNext} tell us whether there exist more elements to be processed yet.


\section{Correctness\label{sec:proofs}}

In this section we provide a formal semantics-based specification of our transformation in order to prove 
its correctness.
For this, we provide a BNF syntax specification and an operational semantics of Java. We consider the subset of Java that is needed to implement the transformation (\emph{if-then-else}, \emph{while}, method calls, \emph{return}, etc.), and we ignore the rest of syntax elements for the sake of simplicity (they do not have any influence because any command inside the body of the loop remains unchanged in the transformed code).  
Moreover, in this section, we center the discussion on \emph{while}-loops and prove properties for this kind of loop. The proof for the other kinds of loops is omitted, but it would be in all cases analogous or slightly incremental. 

We start with a BNF syntax of Java:

%

 
\noindent\rule{\linewidth}{.01in} {\small
\[
\begin{array}{lcl@{~~~~}l@{~~~}l}
P & ::= & M_1, \ldots, M_n, S_p & \mbox{ (program) } & Domains \\ 
  &     &     &     & x,y,z\ldots \in \mathbb{V} ~\mbox{\sf (variables)} \\
  &     &     &     & a,b,c\ldots \in \mathbb{C} ~\mbox{\sf (constants)} \\

M & ::= & m(x_1, \dots, x_n)~\{~S_p;~S_r~\} & \mbox{ (method definition) } & $where $x_1,\ldots,x_n \in \mathbb{V}$ $\\ 
  &     &     &     & $and $m$ is the name$\\
  &     &     &     & $of the method$\\

S_p & ::= & x := E  & \mbox{ (assignment) } \\ 
  & | & x := m(E_0, \dots, E_n)  & \mbox{ (method invocation) } \\
  & | & $if $E_b$ then $S_p  & \mbox{ (if-then) } \\
  & | & $if $E_b$ then $S_p$ else $S_p  & \mbox{ (if-then-else) } \\
  & | & $while $E_b$ do $S_p  & \mbox{ (while) } \\
  & | & S_p;~S'_p  & \mbox{ (sequence) } \\
 
S_r & ::= & $return $E  & \mbox{ (return) } \\ 

E & ::= & E_a \mid E_b  & \mbox{ (expresion) } \\ 
E_a & ::= & E_a + E_a \mid E_a - E_a \mid V & \mbox{ (arithmetic expresion) } \\   
E_b & ::= & E_b $ != $ E_b \mid E_b == E_b \mid V & \mbox{ (boolean expresion) } \\ 
V & ::= & x \mid a  & \mbox{ (variables or constants) } \\ 
\end{array}
\]
} \rule{\linewidth}{.01in}

\bigskip 
\noindent A program is a set of method definitions and at least one initial statement (usually a method invocation). 
Each method definition is composed of a set of statements followed by a \emph{return} statement.  
For simplicity, the arguments of a method invocation can only be expressions (not statements). 
This is not a restriction, because any statement can be assigned to a variable and then be passed as argument of the method invocation. However, this simplification allows us to ease the semantics of method invocations and, thus, it increases readability.  

In the following we consider two versions of the same program shown in Algorithms~\ref{alg_code1} and~\ref{alg_code2}. We assume that in Algorithm~\ref{alg_code1} there exists a variable $x$ already defined before the loop and, for the sake of simplicity, it is the only variable modified inside $S$. Therefore, Algorithm~\ref{alg_code2} is the recursive version of the \emph{while}-loop in Algorithm~\ref{alg_code1} according to our transformation, and hence, $p_0, \dots, p_n$ represent all variables defined before the loop and used in $S$ (the loop statements) and $cond$ (the loop condition). In the case that more than one variable are modified, then the output would be an array with all variables modified. We avoid this case because it is not necessary for the proof.
 
\begin{algorithm}[h!]
\caption{While version}
\label{alg_code1}
\footnotesize
\begin{algorithmic}[1]
    \STATE $~{\bf while}$ $cond$ ${\bf do}$
    \STATE $~~~~~S;$\\
\end{algorithmic}
\end{algorithm} 


\begin{algorithm}[h!]
\caption{Recursive version}
\label{alg_code2}
\footnotesize
\begin{algorithmic}[1]
    \STATE $~m(p_0, \dots, p_n)$ $\{$
    \STATE $~~~~~S;$
    \STATE $~~~~~{\bf if}$ $cond$ ${\bf then}$
    \STATE $~~~~~~~~~x := m(a_0, \dots, a_n);$
    \STATE $~~~~~{\bf return}$ $x;$
    \STATE $~\}$
    \STATE $~{\bf if}$ $cond$ ${\bf then}$
    \STATE $~~~~~x := m(a_0, \dots, a_n);$\\
\end{algorithmic}
\end{algorithm}

In order to provide an operational semantics for this Java subset, which allows recursion, we need a stack to push and pop different frames that represent individual method activations.  
Frames, $f_0,f_1,\ldots\in\mathbb{F}$, are sequences of pairs variable-value.
States, $s_0,s_1,\ldots\in\mathbb{S}$, are sequences of frames ($\mathbb{S}:~\mathbb{F}$ x $\dots$ x $\mathbb{F}$).
We make the program explicitly accessible to the semantics through the use of an environment, $e \in \mathbb{E}$, represented with a sequence of functions from method names $\mathbb{M}$ to pairs of parameters $\mathbb{P}$ and statements $\mathbb{I}$ ($\mathbb{E}:~(\mathbb{M} \rightarrow (\mathbb{P}$ x $\mathbb{I}))$ x $\dots$ x $(\mathbb{M} \rightarrow (\mathbb{P}$ x $\mathbb{I}))$). Our semantics is based on the Java semantics described in \cite{Nie92} with some minor modifications.
It uses a set of functions to update the state, the environment, etc. 

Function $\mathit{Upd_v}$ is used to update a variable ($var$) in the current frame of the state ($s$) with a value ($value$). The current frame in the state is always the last frame introduced (i.e., the last element in the sequence of frames that represent the state). We use the standard notation $f[var \rightarrow value]$ to denote that variable $var$ in frame $f$ is updated to value $value$.

\medskip
$\mathit{Upd_v}(s,~var \rightarrow value) =
\left \{
\begin{array}{lll}
& error & \mathit{if}~s = [ ]\\
& [ f_0, \dots, f_n[var \rightarrow value] ] & \mathit{if}~s = [ f_0, \dots, f_n ]\\
\end{array}
\right.$\\

Function $\mathit{Upd_r}$ records the returned value ($value$) of the current frame of the state ($s$) inside a fresh variable $\Re$ of this frame, so that other frames can consult the value returned by the current frame. 

\medskip
$\mathit{Upd_r}(s,~value) =
\left \{
\begin{array}{lll}
& error & \mathit{if}~s = [ ]\\
& [ f_0, \dots, f_n[\Re \rightarrow value] ] & \mathit{if}~s = [ f_0, \dots, f_n ]\\
\end{array}
\right.$\\

Function $\mathit{Upd_{vr}}$ is used to update a variable ($var$) in the penultimate frame of the state ($s$) taking the value returned by the last frame in the state (which must be previously stored in $\Re$). This happens when a method calls another method and the latter finishes returning a value. In this situation, the last frame in the state should be removed and the value returned should be updated in the penultimate frame. We use the notation $f_n(\Re)$ to consult the value of variable $\Re$ in frame $f_n$.

\medskip
$\mathit{Upd_{vr}}(s,~var) =
\left \{
\begin{array}{lll}
& error & \mathit{if}~s = [ ]~\mathit{or}~s = [ f ]\\
& [ f_0, \dots, f_{n-1}[var \rightarrow f_n(\Re)], f_n ] & \mathit{if}~s = [ f_0, \dots, f_{n-1}, f_n ]\\
\end{array}
\right.$\\

Function $\mathit{Upd_e}$ is used to update the environment ($env$) with a new method definition ($m \rightarrow (P, I)$). 
The environment is used in method invocations to know the method that should be executed.

\medskip
$\mathit{Upd_e}(env,~m \rightarrow (P, I)) = env[m \rightarrow (P, I)]$
\medskip

Function $\mathit{AddFrame}$ adds a new frame to the state ($s$). This frame is a sequence of mappings from parameters ($p_0, \dots, p_n$) to the evaluation of arguments ($a_0, \dots, a_n$). To evaluate an expression we use function $\mathit{Eval}$: a variable is consulted in the state, a constant is just returned, and a mathematical or boolean expression is evaluated with the standard semantics. We use this notation because the evaluation of expressions does not have influence in our proofs, but it significantly reduces the size of derivations, thus, improving clarity of presentation.

\medskip
$\mathit{AddFrame}(s, [p_0, \dots, p_n], [a_0, \dots, a_n]) =\\
~~~~~~~~~~~~~~~~~~~~~~~~~~~\left \{
\begin{array}{lll}
& [ [p_0 \rightarrow Eval(a_0), \dots, p_n \rightarrow Eval(a_n)] ] & if~s = [ ]\\
& [ f_0, \dots, f_m, [ p_0 \rightarrow Eval(a_0), \dots, p_n \rightarrow Eval(a_n) ] ] & if~s = [ f_0, \dots, f_m ]\\
\end{array}
\right.$\\


Analogously, function $\mathit{RemFrame}$ removes the last frame inserted into the state ($s$). 

\medskip
$RemFrame(s) =
\left \{
\begin{array}{lll}
& error & if~s = [ ]\\
& [ ] & if~s = [ f ]\\
& [ f_0, \dots, f_n] & if~s = [ f_0, \dots, f_n, f_{n + 1} ]\\
\end{array}
\right.$\\ 

%
%

We are now in a position ready to introduce our Java operational semantics. Essentially, the semantics is a big-step semantics composed of a set of rules of the form: $\frac{p_1\dots p_n}{\mathit{env}~\vdash~<st,~s>~\Downarrow~s'}$ that should be read as ``The execution of statement $st$ in state $s$ under the environment $env$ can be reduced to state $s'$ provided that  premises $p_1\dots p_n$ hold". The rules of the semantics are shown in Figure~\ref{fig:semantics}.

\begin{figure}[h!]
\noindent
\begin{center}
$\begin{array}{|c|}
\hline
\\
\mathit{New~method}\\
\frac{\mathit{env'}~=~\mathit{Upd_e}(env,~m_0~\rightarrow~(P,~I))~\land~\mathit{env'}~\vdash~<m_1~i,~s>~\Downarrow~s'~}{\mathit{env}~\vdash~<m_0(P) \{ I \}~m_1~i,~s>~\Downarrow~s'}\\
\\
\mathit{Empty~statement}\\
\frac{}{\mathit{env}~\vdash~<\surd,~s>~\Downarrow~s}\\
\\
\mathit{Asignment}\\
\frac{s'~=~\mathit{Upd_v}(s,~x~\rightarrow~\mathit{Eval}(op,~s))}{\mathit{env}~\vdash~<x:=op,~s>~\Downarrow~s'}\\
\\
\mathit{Method~invocation}\\
\frac{(P,~I)~=~\mathit{env}(m)~\land~s'~=~\mathit{AddFrame}(s, P, A)~\land~\mathit{env}~\vdash~<I,~s'>~\Downarrow~s''~\land~ s'''~=~\mathit{Upd_{vr}}(s'', x)~\land~s''''~=~\mathit{RemFrame}(s''')}{\mathit{env}~\vdash~<x:=m(A),~s>~\Downarrow~s''''}\\
\\
\mathit{If}\\
\begin{array}{c}
\frac{\mathit{env}~\vdash~<if~cond~then~i_0~else~\surd,~s>~\Downarrow~s'}{\mathit{env}~\vdash~<if~cond~then~i_0,~s>~\Downarrow~s'}\\
\\
\frac{<cond,~s>~\Rightarrow~true~\land~\mathit{env}~\vdash~<i_0,~s>~\Downarrow~s'}{\mathit{env}~\vdash~<if~cond~then~i_0~else~i_1,~s>~\Downarrow~s'}\\
\\
\frac{<cond,~s>~\Rightarrow~false~\land~\mathit{env}~\vdash~<i_1,~s>~\Downarrow~s'}{\mathit{env}~\vdash~<if~cond~then~i_0~else~i_1,~s>~\Downarrow~s'}\\
\end{array}\\
\\
\mathit{While}\\
\begin{array}{c}
\frac{<cond,~s>~\Rightarrow~false}{\mathit{env}~\vdash~<while~cond~do~i,~s>~\Downarrow~s}\\
\\
\frac{<cond,~s>~\Rightarrow~true~\land~\mathit{env}~\vdash~<i,~s>~\Downarrow~s'~\land~\mathit{env}~\vdash~<while~cond~do~i,~s'>~\Downarrow~s''}{\mathit{env}~\vdash~<while~cond~do~i,~s>~\Downarrow~s''}\\
\end{array}\\
\\
\mathit{Sequence}\\
\frac{\mathit{env}~\vdash~<i_0,~s>~\Downarrow~s'~\land~\mathit{env}~\vdash~<i_1,~s'>~\Downarrow~s''}{\mathit{env}~\vdash~<i_0;~i_1,~s>~\Downarrow~s''}\\
\\
\mathit{Return}\\
\frac{s'~=~\mathit{Upd_r}(s,~\mathit{Eval}(op,~s))}{\mathit{env}~\vdash~<\mathit{return}~op,~s>~\Downarrow~s'}\\
\\
\hline
\end{array}$
\end{center}
\caption{Java Semantics}
\label{fig:semantics}
\end{figure}

\bigskip

We can now prove our main result. 

\begin{theorem}[Correctness]
Algorithm~\ref{alg_code2} is semantically equivalent to Algorithm~\ref{alg_code1}.
\end{theorem}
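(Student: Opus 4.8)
The plan is to derive the equivalence directly from the big-step operational semantics of Figure~\ref{fig:semantics}, interpreting ``semantically equivalent'' to mean that, under every environment and in every state binding the relevant variables, the two programs reduce to the same final state (and one admits a derivation iff the other does). The first step is administrative: the method definition prefixing Algorithm~\ref{alg_code2} is consumed by the \emph{New method} rule, whose only effect is to extend the environment with $m \mapsto ((p_0,\dots,p_n),~I)$, where $I$ is the body $S;~\mathit{if}~cond~\mathit{then}~x := m(a_0,\dots,a_n);~\mathit{return}~x$. After this reduction the programs to compare are $\mathit{while}~cond~\mathit{do}~S$ (Algorithm~\ref{alg_code1}) and the single statement $\mathit{if}~cond~\mathit{then}~x := m(a_0,\dots,a_n)$ evaluated under the extended environment (Algorithm~\ref{alg_code2}). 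Note that $x$ occurs among $p_0,\dots,p_n$ — hence among the arguments $a_0,\dots,a_n$, which by construction are exactly those variables — because $x$ is defined before the loop and is modified by $S$; so $\mathit{return}~x$ is well formed inside $m$.

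Next I would establish a \textbf{locality lemma} for $S$: since $x$ is the only variable modified inside $S$, and $p_0,\dots,p_n$ are exactly the pre-loop variables read by $S$ and $cond$, any derivation of $<S,~s>~\Downarrow~s'$ leaves every frame of $s$ below the top one untouched, alters in the top frame only $x$ (and possibly the auxiliary variable $\Re$), and the resulting value of $x$ depends only on the values of $p_0,\dots,p_n$ in the top frame of $s$. This follows by structural induction on $S$: an assignment touches only the top frame; a method invocation pushes and then pops a frame, so its net effect on the stack is again confined to the top frame; and \emph{if}, \emph{while} and sequencing merely compose such effects. The same reasoning gives the analogous locality property for $\mathit{while}~cond~\mathit{do}~S$.

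The core is a simulation lemma proved by induction on the number $k$ of loop iterations, with an induction hypothesis deliberately strengthened to quantify over the caller frames lying below the frame that currently plays the role of the loop's frame: for any such caller stack $B$ and any top frame $f$ binding $p_0,\dots,p_n$, the statements $\mathit{while}~cond~\mathit{do}~S$ and $\mathit{if}~cond~\mathit{then}~x := m(a_0,\dots,a_n)$ applied to the state ``$f$ on top of $B$'' reduce to the same state. If $k=0$ the condition is false and both leave the state unchanged. If $k>0$ the recursive statement fires the \emph{Method invocation} rule: \emph{AddFrame} pushes a fresh frame that is the copy of $f$ restricted to $p_0,\dots,p_n$ (because each $a_i$ is the variable $p_i$); running $S$ on it performs exactly the first loop iteration (locality lemma); the nested $\mathit{if}~cond~\mathit{then}~x := m(\dots)$ then, by the induction hypothesis applied with the enlarged caller stack, carries out the remaining $k-1$ iterations, leaving in the top frame the loop's final value of $x$; finally the \emph{Return} rule, $\mathit{Upd_{vr}}$ and \emph{RemFrame} copy that value into $x$ in $f$ and discard the auxiliary frame, producing exactly the state the \emph{while}-loop reaches (which, by the locality property, also differs from the starting state only in the value of $x$ in $f$). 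The converse implication is proved symmetrically by induction on the height of the recursive derivation — which is finite and bounds the recursion depth, hence the number of iterations simulated. The Theorem then follows by instantiating $B$ with the actual caller stack at the loop; since the semantics is deterministic (each rule is selected by the shape of the statement and the truth values of the conditions it mentions), equal derivability forces equal results, and the divergent case is covered by the bi-directional simulation.

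I expect the main obstacle to be exactly this frame bookkeeping. The \emph{while}-loop runs wholly within one activation frame, whereas the recursive version erects a stack of frames, one per iteration, and the value of $x$ survives the descent only because it is passed down as an argument and handed back up through $\Re$, $\mathit{Upd_{vr}}$ and \emph{RemFrame}. Making the simulation hypothesis general enough — universally quantified over the arbitrary caller frames sitting beneath the ``current'' frame — is what lets the inductive step close, and the locality lemma is precisely what licenses ignoring those lower frames and reducing the whole comparison to tracking a single value, the current $x$.
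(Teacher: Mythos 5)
Your proposal is correct and follows essentially the same route as the paper's own proof: a case split on zero versus at least one iteration, followed by induction on the number of iterations, simulating each iteration of the \emph{while}-loop by one activation of $m$ and tracking how the value of $x$ is passed down through \emph{AddFrame} and handed back up via $\Re$, $\mathit{Upd_{vr}}$ and \emph{RemFrame}. Your explicit locality lemma and the strengthening of the induction hypothesis to quantify over the caller frames beneath the current one are refinements the paper leaves implicit (its stated hypothesis is for a single-frame state but is then applied under a deeper stack), so your version is, if anything, the more carefully closed induction.
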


\begin{proof}
We prove this claim by showing that the final state of Algorithm~\ref{alg_code1} is always the same as the final state of Algorithm~\ref{alg_code2}. The semantics of a program $P$ is:\\
\medskip

$S(P) = s~~$ iff $~~\mathit{[]}\vdash<P,~[]> \Downarrow s$\\

Therefore, we say that two programs $P_1$ and $P_2$ are equivalent if they have the same semantics:\\
\medskip

$S(P_1) = S(P_2)~~$ iff $~~\mathit{[]}\vdash<P_1,~[]> \Downarrow s_1 ~~\land~~ \mathit{[]}\vdash<P_2,~[]> \Downarrow s_2 ~~\land~~ s_1 = s_2$\\

\noindent For the sake of generality, in the following we consider that the loops can appear inside any other code. Therefore, the environment and the state are not necessarily empty. Thus, we will assume an initial environment $env_0$ and an initial state $s$: $\mathit{env_0}\vdash<P,~s>$.
We proof this semantic equivalence analyzing two possible cases depending on whether the loop is executed or not.\\

\noindent
{\bf 1) \underline{Zero iterations}}\\
\indent This situation can only happen when the condition $cond$ is not satisfied the first time it is evaluated. 
Hence, we have the following semantics derivation for each program:\\

\noindent \underline{Iterative version}\\
{\tiny \begin{center}
$\infer{\mathit{env}_0~\vdash~<while~cond~do~S,~s>~\Downarrow~s}{<cond,~s>~\Rightarrow~false}$\\
\end{center}}

\noindent \underline{Recursive version}\\
{\tiny \begin{center}
$\infer{\mathit{env}_0~\vdash~<m(P)~\{~S; I~\}~if~cond~then~t,~s>~\Downarrow~s} {
	\mathit{env} = \mathit{Upd_e}(\mathit{env}_0,~m~\rightarrow~(P,~S; I)) &
	\infer{\mathit{env}~\vdash~<if~cond~then~t,~s>~\Downarrow~s} {
		\infer{\mathit{env}~\vdash~<if~cond~then~t~else~\surd,~s>~\Downarrow~s} {
			<cond,~s>~\Rightarrow~false &
			\infer{\mathit{env}~\vdash~<\surd,~s>~\Downarrow~s}{}
		}
	}
}$
\end{center}}

\noindent
%
Clearly, the state is never modified neither in the iterative version nor in the recursive version. Therefore, both versions are semantically equivalent.\\


\noindent
{\bf 2) \underline{One or more iterations}}\\
\indent This means that the condition $cond$ is satisfied at least once. 
Let us consider that $cond$ is satisfied $n$ times, producing $n$ iterations. 
We proof that the final state of the program in Algorithm~\ref{alg_code2} is equal to the final state of the program in Algorithm~\ref{alg_code1} by induction over the number of iterations performed.\\

\noindent{\bf(Base Case)} In the base case, only one iteration is executed. 
Hence, we have the following derivations:\\

\noindent \underline{Iterative version}\\
{\tiny \begin{center}
$\infer{\mathit{env}_0~\vdash~<while~cond~do~S,~s>~\Downarrow~s^1}{
	<cond,~s>~\Rightarrow~true &
	\mathit{env}_0~\vdash~<S,~s>~\Downarrow~s^1 &
	\infer{\mathit{env}_0~\vdash~<while~cond~do~S,~s^1>~\Downarrow~s^1}{<cond,~s>~\Rightarrow~false}
}$
\end{center}}

\noindent \underline{Recursive version}\\
{\tiny \begin{center}
$\infer{\bigtriangleup} {
	\infer{\mathit{env}~\vdash~<S; I,~s^1>~\Downarrow~s^3} {
		\mathit{env}~\vdash~<S,~s^1>~\Downarrow~s^2 &
		\infer{\mathit{env}~\vdash~<if~cond~then~t;~return~x,~s^2>~\Downarrow~s^3} {
			\infer{\mathit{env}~\vdash~<if~cond~then~t,~s^2>~\Downarrow~s^2} {
				\infer{\mathit{env}~\vdash~<if~cond~then~t~else~\surd,~s^2>~\Downarrow~s^2} {
					<cond,~s^2>~\Rightarrow~false &
					\infer{\mathit{env}~\vdash~<\surd,~s^2>~\Downarrow~s^2} {}
				}
			} &
			\infer{\mathit{env}~\vdash~<return~x,~s^2>~\Downarrow~s^3} {s^3~=~\mathit{Upd_r}(s^2,~\mathit{Eval}(x, s^2))}
		}
	}
}$
\end{center}}

\noindent
{\tiny \begin{center}
$\infer{\mathit{env}~\vdash~<if~cond~then~t,~s>~\Downarrow~s^5} {
	\infer{\mathit{env}~\vdash~<if~cond~then~t~else~\surd,~s>~\Downarrow~s^5} {
		<cond,~s>~\Rightarrow~true &
		\infer{\mathit{env}~\vdash~<x:= m(a_0, \dots, a_n),~s>~\Downarrow~s^5} {
			(P, I) = env(m) &
			s^1~=~\mathit{AddFrame}(s, P, [a_0, \dots, a_n]) &
			\bigtriangleup &
			s^4~=~\mathit{Upd_{vr}}(s^3, x) &
			s^5~=~\mathit{RemFrame}(s^4)
		}
	}
}$
\end{center}}


\noindent
We can assume that variable $x$ has an initial value $z^0$, which must be the same in both versions of the algorithm. Then, states are modified during the iteration as follows:

{\scriptsize
\begin{center}
$
\begin{array}{lc@{~~~~~~~~~~~~~~~~~~~~~}l}
$\underline{Iterative version}$ & & $\underline{Recursive version}$\\
s = [ f_0 ] \Rightarrow f_0 = \{ x \rightarrow z^0 \} & & s = [ f_0 ] \Rightarrow f_0 = \{ x \rightarrow z^0 \}\\
s^1 = [ f_0 ] \Rightarrow f_0 = \{ x \rightarrow z^1 \} & & s^1 = [ f_0, f_1 ] \Rightarrow f_0 = \{ x \rightarrow z^0 \} \land f_1 = \{ x \rightarrow z^0 \}\\
& & s^2 = [ f_0, f_1 ] \Rightarrow f_0 = \{ x \rightarrow z^0 \} \land f_1 = \{ x \rightarrow z^1 \}\\
& & s^3 = [ f_0, f_1 ] \Rightarrow f_0 = \{ x \rightarrow z^0 \} \land f_1 = \{ x \rightarrow z^1, \Re \rightarrow z^1 \}\\
& & s^4 = [ f_0, f_1 ] \Rightarrow f_0 = \{ x \rightarrow z^1 \} \land f_1 = \{ x \rightarrow z^1, \Re \rightarrow z^1 \}\\
& & s^5 = [ f_0 ] \Rightarrow f_0 = \{ x \rightarrow z^1 \}\\
\end{array}
$
\end{center}
}

Clearly, with the same initial states, both algorithms produce the same final state.

\medskip
\noindent{\bf(Induction Hypothesis)} We assume as the induction hypothesis that executing $i$ iterations in both versions with an initial value $z^0$ for $x$ then, if the iterative version obtains a final value $z^n$ for $x$ then the recursive version correctly obtains and stores the same final value $z^n$ for variable $x$ in the top frame.

{\scriptsize
\begin{center}
$
\begin{array}{lc@{~~~~~~~~~~~~~~~~~~~~~}l}
$\underline{Iterative version}$ & & $\underline{Recursive version}$\\
s = [ f_0 ] \Rightarrow f_0 = \{ x \rightarrow z^0 \} & & s = [ f_0 ] \Rightarrow f_0 = \{ x \rightarrow z^0 \}\\
\dots & & \dots\\
s' = [ f_0 ] \Rightarrow f_0 = \{ x \rightarrow z^n \} & & \indent s' = [ f_0 ] \Rightarrow f_0 = \{ x \rightarrow z^n \}\\
\end{array}
$
\end{center}
}

\medskip
\noindent{\bf(Inductive Case)} We now prove that executing $i + 1$ iterations in both versions with an initial value $z^0$ for $x$ then, if the iterative version obtains a final value $z^n$ for $x$ then the recursive version correctly obtains and stores the same final value $z^n$ for variable $x$ in the top frame.\\

The derivation obtained for each version is the following:\\

\noindent \underline{Iterative version}\\
\noindent\\
{\tiny \begin{center}
$\infer{\mathit{env}_0~\vdash~<while~cond~do~S,~s>~\Downarrow~s^2}{
	<cond,~s>~\Rightarrow~true &
	\mathit{env}_0~\vdash~<S,~s>~\Downarrow~s^1&
	\infer{\mathit{env}_0~\vdash~<while~cond~do~S,~s^1>~\Downarrow~s^2}{\mathit{Induction}~\mathit{hypotesis}}
}$
\end{center}}

\noindent \underline{Recursive version}\\
\noindent
{\tiny \begin{center}
$\infer{\bigtriangleup} {
	\infer{\mathit{env}~\vdash~<S; I,~s^1>~\Downarrow~s^4} {
		\mathit{env}~\vdash~<S,~s^1>~\Downarrow~s^2 &
		\infer{\mathit{env}~\vdash~<if~cond~then~t;~return~x,~s^2>~\Downarrow~s^4} {
			\infer{\mathit{env}~\vdash~<if~cond~then~t,~s^2>~\Downarrow~s^3} {\mathit{Induction}~\mathit{hypotesis}} &
			\infer{\mathit{env}~\vdash~<return~x,~s^3>~\Downarrow~s^4} {s^4~=~\mathit{Upd_r}(s^3,~\mathit{Eval}(x, s^3))}
		}
	}
}$
\end{center}}

\noindent
{\tiny \begin{center}
$\infer{\mathit{env}~\vdash~<if~cond~then~t,~s>~\Downarrow~s^6} {
	\infer{\mathit{env}~\vdash~<if~cond~then~t~else~\surd,~s>~\Downarrow~s^6} {
		<cond,~s>~\Rightarrow~true &
		\infer{\mathit{env}~\vdash~<x:= m(a_0, \dots, a_n),~s>~\Downarrow~s^6} {
			(P, I) = env(m) &
			s^1~=~\mathit{AddFrame}(s, P, [a_0, \dots, a_n]) &
			\bigtriangleup &
			s^5~=~\mathit{Upd_{vr}}(s^4, x) &
			s^6~=~\mathit{RemFrame}(s^5)
		}
	}
}$
\end{center}}

\noindent
Because both algorithms have the same initial value $z^0$ for $x$ then the states during the iteration are modified as follows (the * state is obtained by the induction hypothesis):\\

{\scriptsize
\begin{center}
$
\begin{array}{lc@{~~~~~~~~~~~~~~~~~~~~~}l}
$\underline{Iterative version}$ & & $\underline{Recursive version}$\\
s = [ f_0 ] \Rightarrow f_0 = \{ x \rightarrow z^0 \} & & s = [ f_0 ] \Rightarrow f_0 = \{ x \rightarrow z^0 \}\\
s^1 = [ f_0 ] \Rightarrow f_0 = \{ x \rightarrow z^1 \} & & s^1 = [ f_0, f_1 ] \Rightarrow f_0 = \{ x \rightarrow z^0 \} \land f_1 = \{ x \rightarrow z^0 \}\\
s^2 = [ f_0 ] \Rightarrow f_0 = \{ x \rightarrow z^n \} * & & s^2 = [ f_0, f_1 ] \Rightarrow f_0 = \{ x \rightarrow z^0 \} \land f_1 = \{ x \rightarrow z^1 \}\\
 & & s^3 = [ f_0, f_1 ] \Rightarrow f_0 = \{ x \rightarrow z^0 \} \land f_1 = \{ x \rightarrow z^n \} *\\
 & & s^4 = [ f_0, f_1 ] \Rightarrow f_0 = \{ x \rightarrow z^0 \} \land f_1 = \{ x \rightarrow z^n, \Re \rightarrow z^n \}\\
 & & s^5 = [ f_0, f_1 ] \Rightarrow f_0 = \{ x \rightarrow z^n \} \land f_1 = \{ x \rightarrow z^n, \Re \rightarrow z^n \}\\
 & & s^6 = [ f_0 ] \Rightarrow f_0 = \{ x \rightarrow z^n \}\\
\end{array}
$
\end{center}
}

\noindent

\noindent Hence, Algorithm~\ref{alg_code2} and Algorithm~\ref{alg_code1} obtain the same final state, and thus, they are semantically equivalent.
\end{proof}

\section{Conclusions}
\label{sec_concl}
 
Transforming loops to recursion is useful in many situations such as, e.g., debugging, verification or memory hierarchies optimization. It is therefore surprising that there did not exist an automatic transformation from loops to recursion, but it is even more surprising that no public report exists that describes how to implement this transformation. 

In this article the transformation of each kind of Java loop 
has been described independently with a specific treatment for it that is illustrated with an example of use.
Moreover, the transformation has been described in such a way that it can be easily adapted to any other language where recursion and iteration exist. 

\bibliography{\biblio{biblio}}
\bibliographystyle{abbrv}

\end{document}